\newtheorem{theorem}{Theorem}[section]
\newtheorem{corollary}[theorem]{Corollary}
\newtheorem{lemma}[theorem]{Lemma}
\title{Bounding a Polygon by a Minimum Number of Vertices}
\author{Mikkel Abrahamsen \and Jack Stade \and Shuyi Yan \and Hanwen Zhang}
\newcommand\blfootnote[1]{%
  \begingroup
  \renewcommand\thefootnote{}\footnote{#1}%
  \addtocounter{footnote}{-1}%
  \endgroup
}
\date{}
\begin{document}

\maketitle
\blfootnote{MA, JS, SY and HZ: University of Copenhagen, \texttt{\{miab,jast,yash,hazh\}@di.ku.dk}.
MA, JS and HZ are supported by Starting Grant 1054-00032B from the Independent Research Fund Denmark under the Sapere Aude research career programme.
MA, JS, YS and HZ are part of Basic Algorithms Research Copenhagen (BARC), supported by the VILLUM Foundation grant 54451.}

\begin{abstract}
Suppose that a polygon $P$ is given as an array containing the vertices in counterclockwise order.
We analyze how many vertices (including the index of each of these vertices) we need to know before we can bound $P$, i.e., report a bounded region $R$ in the plane such that $P\subset R$.
We show that there exists polygons where $4\log_2 n+O(1)$ vertices are enough, while $\log_3n-o(\log n)$ must always be known.
We thus answer the question up to a constant factor.
This can be seen as an analysis of the shortest possible certificate or the best-case running time of any algorithm solving a variety of problems involving polygons, where a bound must be known in order to answer correctly.
This includes various packing problems such as deciding whether a polygon can be contained inside another polygon.
\end{abstract}

\section{Introduction}

This research was motivated by Marc van Kreveld's slides \cite{slides} for a course in computational geometry at Utrecht University.
The slides contain the following questions: \emph{``Suppose that a simple polygon with $n$ vertices is given;
the vertices are given in counterclockwise order along the
boundary. Give an efficient algorithm to determine all edges that
are intersected by a given line.
How efficient is your algorithm? Why is your algorithm efficient?''}
A naive $O(n)$-time algorithm that checks whether each edge intersects the line is worst-case optimal, which is perhaps the intended answer to the last question.
Indeed, in an example as in \Cref{fig:worstcase}, we need to check both endpoints of nearly all $n$ edges in order to solve the problem.

\begin{figure}[ht]
\centering   \includegraphics[page=2]{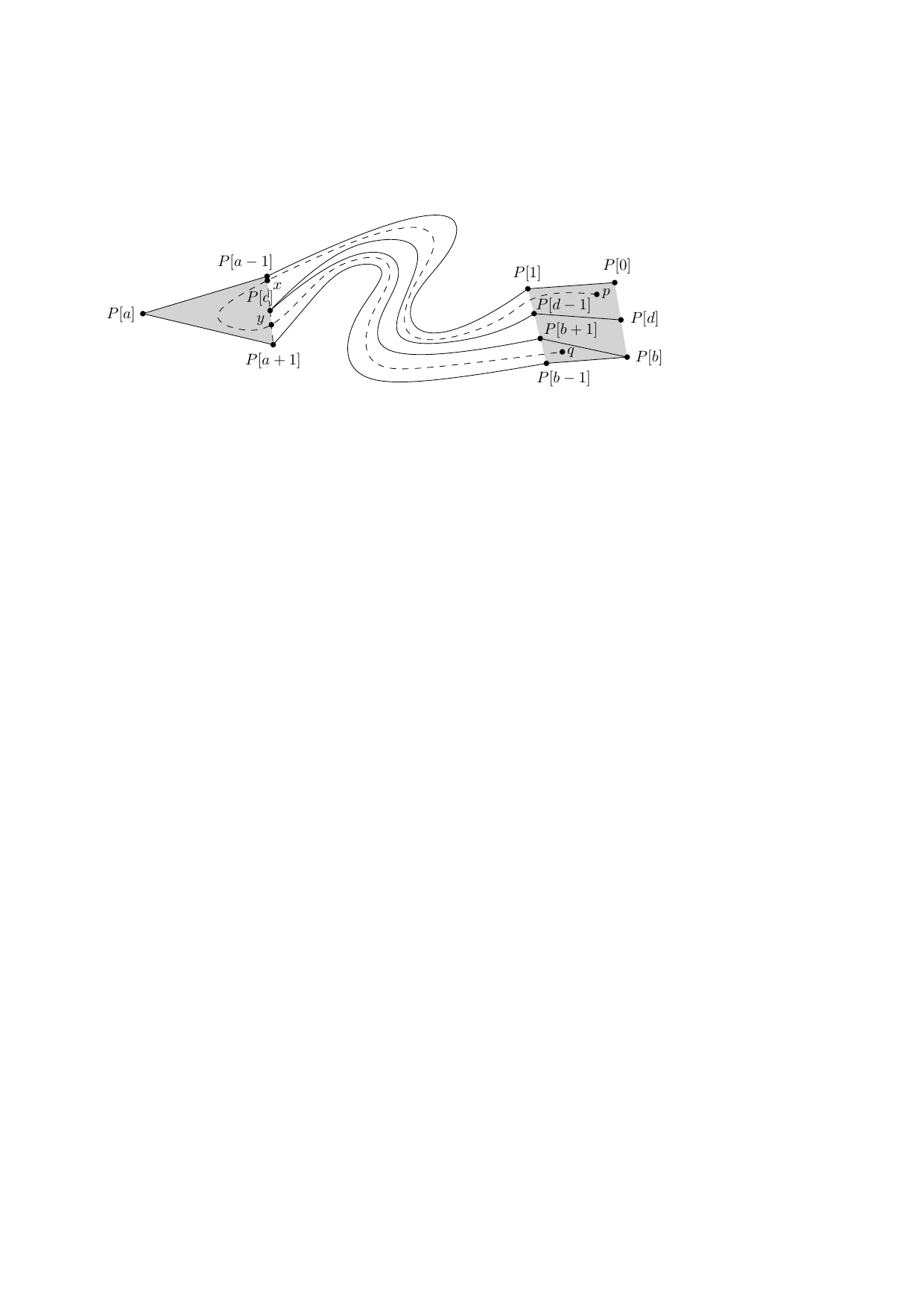}
\caption{Each of the marked vertices could be above or below the line, so to find the intersecting edges, we need to consider all of these vertices.}
\label{fig:worstcase}
\end{figure}

This made us wonder whether $O(n)$ is also the \emph{best-case} running time.
In particular, what is the minimum number of vertices considered by any correct algorithm on any input, consisting of a polygon $P$ and a line?
The algorithm may query the position of any vertex 
$P[i]$, and crucially, knowing some of the vertices may restrict the placement of the others, since the polygon must remain simple---its edges cannot cross. 
Even then, we were initially certain that it would be impossible for any algorithm to answer correctly after inspecting only $o(n)$ vertices.

To our own surprise we managed to construct polygons where you only need to know $4\log_2 n+O(1)$ vertices in order to report the number of edges intersecting a line.
We subsequently managed to show that any algorithm must consider at least $\log_3 n - o(\log n)$ vertices in order to answer correctly.
We have thus found the minimum up to constant factors.
The problem we are actually considering is how many vertices we need to know before we can \emph{bound} $P$, i.e., report a bounded region $R$ in the plane so that $P\subset R$; hence the title of the paper.
If the given line $\ell$ is disjoint from $R$, we can then report that there are no intersecting segments. 

There are several other problems involving polygons that are closely linked to the bounding problem, where our result implies a $\Theta(\log n)$ bound on the shortest certificate or, equivalently, the best-case running time among all algorithms.
This for instance includes most packing problems:
Suppose that we want to decide if a given object (or set of objects) can be placed in a polygon $P$.
If we know a bound $R$ on $P$ and the objects don't fit in $R$, we can immediately report that they don't fit in $P$ either.
On the other hand, if we have no bound on $P$, then the objects may fit, so we can't answer before a bound is (implicitly) known. 
Thus, the $\Theta(\log n)$ bound again describes the length of the shortest possible certificate.

To mention another example where a bound is used positively, suppose that we want to decide if a polygon $P$ fits in a given square. 
If we know a bound $R$ on $P$ and $R$ fits in the square, we can answer in the affirmative, but if we can't even derive any bound, it is impossible to answer correctly (provided that the vertices seen so far do not already reveal that $P$ is too large).
Hence, the $\Theta(\log n)$ bound again captures the best case.
For some recent references to problems of this type, see~\cite{DBLP:conf/soda/KunnemannN22,DBLP:journals/theoretics/AbrahamsenMS24,DBLP:conf/focs/AbrahamsenS24,DBLP:conf/soda/AbrahamsenR25,DBLP:conf/esa/ArkinD0GMPT20,DBLP:conf/approx/FeketeKKMS11}.

A key concept in our work is \emph{link distance}, which has been studied since the early days of computational geometry; see the survey in \cite[p.~822]{DBLP:reference/cg/Mitchell04} and \cite{DBLP:journals/jocg/KostitsynaLPS17,DBLP:journals/comgeo/MitchellPS14} for some of the recent references.
The link distance between two given points is the minimum number of edges needed to connect them with a polygonal path that avoids some obstacles.
In our construction, we use link distance to argue that the polygon must stay in a bounded region.
Knowing some vertices of $P$, it may be possible to conclude that in order to connect two points $p$ and $q$ without crossing other edges, we need at least $k$ edges.
If we are then told that vertex $i$ is the point $p$ and vertex $i+k$ is $q$, then the whole chain from vertex $i$ to $i+k$ must follow a minimum link path from $p$ to $q$, and this greatly restricts all of the vertices $i+1,\ldots,i+k-1$.

In \Cref{sec:construction}, we will describe a polygon $P$ with $n$ vertices where $4\log_2 n+O(1)$ vertices restrict the entire polygon to a bounded region.
In \Cref{sec:lowerbound}, we show that no polygon can be bounded by just $\log_3 n - \log_3\log_3 n - 1$ vertices.
We have tried to keep both upper and lower bound as simple as possible and thus not made a serious attempt to find the correct leading constant, but this remains an interesting question.

\subsection{Preliminaries}

A \emph{polygonal path} is a simple open curve in the plane consisting of a finite number of line segments.
A \emph{(simple) polygon} is the closed region in the plane enclosed by a simple closed curve consisting of a finite number of line segments.
These segments are the \emph{edges} of the polygon, and their endpoints are the \emph{vertices}.

A polygon $P$ with $n$ vertices is represented by an array of points, so that the vertices are $P[0],\ldots,P[n-1]$ in counterclockwise order along the boundary of $P$.
For indices $a,b$ with $a< b$, the \emph{chain} $P[a:b]$ is the polygonal path from $P[a]$ to $P[b]$ along the boundary in counterclockwise order.
If the segment $P[a]P[b]$ does not cross the chain $P[a:b]$, then $P[a:b]\cup P[a]P[b]$ encloses a simple polygon that we denote $\bar P[a:b]$.

Given a set of edges $E$ and two points $p$ and $q$, the \emph{link distance} between $p$ and $q$ with respect to $E$ is the minimum number of edges on a polygonal path $\pi$ from $p$ to $q$ so that $\pi$ is disjoint from $E$, except possibly at $p$ and $q$.
A \emph{minimum link path} is such a path $\pi$.

A \emph{partial description} $P$ of a polygon is an array of points where some entries have been specified while the rest have not.
A \emph{realization} $Q$ of $P$ is a polygon that agrees on the specified vertices.
We say that $P$ is \emph{realizable} if there exists a realization of $P$.
We can likewise talk about realizability of a chain $P[a:b]$, which means that the unspecified vertices on the chain can be chosen so that the chain does not intersect itself and also avoids another set of edges which will be clear from the context.

A \emph{bound} on a polygon $P$ is a bounded region $R$ in the plane so that $P\subset R$.
Likewise, a bound on a partial description $P$ is a bounded region $R$ that contains all realizations of $P$.
Note that if $P$ cannot be realized, then any bounded region $R$ is trivially a bound on $P$.

\section{Construction}\label{sec:construction}

In this section we give a realizable partial description $P$ of a polygon with more than $2^j$ vertices, for any integer $j$, where we can give a bound on $P$ after specifying only $4j+O(1)$ vertices.
We first need some notation.
For an index $i$, define $\Delta_i$ to be the triangle $P[i-1]P[i]P[i+1]$ and define $\square_i$ to be the quadrilateral $P[0]P[1]P[i-1]P[i]$.
We will only use the notation $\square_i$ when the vertices form a convex quadrilateral.

In the following two lemmas, we consider the following setting.
Let $a$ and $b$ be indices with $a>2$ and $b>a+2$ so that in any realization of $P[0:b]$, we have the following properties:
\begin{itemize}
\item $\square_b$ is convex,
\item the polygon $\bar P[0:b]$ is simple and contains $\Delta_a$ and $\square_b$,
\item the link distance with respect to $P[0:b]$ from any point in $\square_b$ to any point in $\Delta_a$ is at least~$k$.
\end{itemize}
The setting is illustrated in \Cref{fig:doubling}.
Our first lemma explains how we can double the link distance by specifying only a constant number of new vertices.

\begin{figure}
\centering   \includegraphics[page=1]{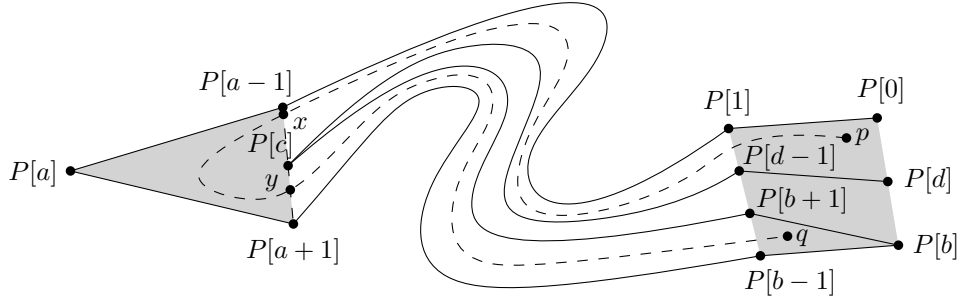}
\caption{Illustration of \Cref{lemma:doubling}.
The dashed curve is $\pi$ from the proof.}
\label{fig:doubling}
\end{figure}

\begin{lemma}\label{lemma:doubling}
In the situation described above, let $P[b+1]\in P[1]P[b-1]$,  $P[c]\in P[a-1]P[a+1]$, $P[d-1]\in P[1]P[b+1]$, and $P[d]\in P[0]P[b]$, for indices $c,d$ with $d>c>b+1$.
Suppose that in any realization of $P[0:d]$, $P[b+1:d-1]$ is contained in $\bar P [1:b-1]$ and is disjoint from the interiors of both $\square_b$ and $\Delta_a$.
Then $\square_d$ is convex and $\bar P[0:d]$ is simple and contains $\Delta_b$ and $\square_d$, and the link distance with respect to $P[0:d]$ from any point in $\square_d$ to any point in $\Delta_b$ is at least $2k$.
\end{lemma}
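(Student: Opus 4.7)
Plan. Convexity of $\square_d = P[0]P[1]P[d-1]P[d]$ is immediate: $P[0], P[1]$ are vertices of $\square_b$ and $P[d-1], P[d]$ lie on the edges $P[1]P[b-1]$ and $P[b]P[0]$ of $\square_b$, so $\square_d$ is a convex quadrilateral inscribed in the convex $\square_b$. For simplicity of $\bar P[0:d]$ I check that the new parts of the boundary---the cuts $P[b]P[b+1]$ and $P[d-1]P[d]$, the chain $P[b+1:d-1]$, and the closing segment $P[d]P[0]$---do not cross one another or the old chain: the cuts are chords of the convex region $\square_b$ with endpoints in disjoint locations, the chain lies in $\bar P[1:b-1]$ on the opposite side of $P[1]P[b-1]$ from $\square_b$ and avoids the interiors of $\square_b$ and $\Delta_a$, and the closing segment lies on $P[0]P[b]$. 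Containment of $\square_d$ and $\Delta_b$ in $\bar P[0:d]$ follows from a direct check of orientations.

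For the link-distance bound, my strategy is to show that any link path $\pi$ from $p \in \square_d$ to $q \in \Delta_b$ avoiding the edges of $P[0:d]$ must visit the closed triangle $\bar\Delta_a$. Given such a visit, some vertex $z$ of $\pi$ lies in $\bar\Delta_a$, because each link is a straight segment and can cross the line through $P[a-1]P[a+1]$ (the only side of $\Delta_a$ not in $E$) at most once, so once a link enters $\bar\Delta_a$ it must end inside it. Split $\pi$ at $z$ into $\pi_1$ (from $p$ to $z$) and $\pi_2$ (from $z$ to $q$); since $\square_d \subseteq \square_b$ and $P[0:d] \supseteq P[0:b]$, the subpath $\pi_1$ is a path from $\square_b$ to $\bar\Delta_a$ avoiding $P[0:b]$ and so has at least $k$ links by the original hypothesis. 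Symmetrically $|\pi_2| \geq k$, giving $|\pi| \geq 2k$.

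The crux is the topological claim that $\pi$ must meet $\bar\Delta_a$. The chain $P[b+1:d-1]$ together with the chord segment $[P[b+1], P[d-1]]$ bounds a region $R \subset \bar P[1:b-1]$ on the side opposite to $\Delta_a$, so $\Delta_a$ sits on the far side of the chain and meets it only at $P[c]$. Consequently the corridor $\bar P[1:b-1] \setminus R$---the unique region of $\bar P[0:d]$ connecting the lower chord segment $[P[1], P[d-1]]$ (adjacent to $\square_d$) to the upper chord segment $[P[b+1], P[b-1]]$ (adjacent to $\Delta_b$)---has a pinch at $P[c]$, and that pinch is blocked because $P[c]$ is itself a vertex of $E$. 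The only way to bypass the pinch is to enter $\Delta_a$ through its open side $P[a-1]P[a+1]$ and come out on the other side. A detour outside $\bar P[0:b]$ via the non-$E$ segment $P[0]P[b]$ cannot reach $\Delta_b$ either, because $\Delta_b$ meets $P[0]P[b]$ only at the corner $P[b]$, which is itself a vertex of $E$; hence the path is forced back into the corridor and through the pinch. Turning this pinch intuition into a rigorous Jordan-curve separation will be the main technical step of the proof.
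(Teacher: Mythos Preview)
Your handling of convexity, simplicity, and containment is the same as the paper's, just with slightly more detail. The link-distance argument, however, diverges.

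The paper does not argue via a ``pinch'' in a corridor. It writes down two explicit Jordan curves,
\[
J_1 = P[a-1:c]\cup P[a-1]P[c] \qquad\text{and}\qquad J_2 = P[a+1:c]\cup P[a+1]P[c],
\]
each of which is built from obstacle edges together with a single chord lying inside $\Delta_a$. Each separates $p\in\square_d$ from $q\in\Delta_b$, so $\pi$ must cross $P[a-1]P[c]$ at some point $x$ and $P[c]P[a+1]$ at some point $y$. Both $x$ and $y$ lie in $\Delta_a$, giving $\lvert\pi[p,x]\rvert\ge k$ and $\lvert\pi[y,q]\rvert\ge k$ by the inductive hypothesis; and since the straight segment $xy$ passes through the obstacle vertex $P[c]$, $x$ and $y$ cannot lie on the same link of $\pi$, hence $\lvert\pi\rvert\ge 2k$.

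Your variant---find a \emph{vertex} $z$ of $\pi$ inside $\bar\Delta_a$ and split there---also works once $z$ is found, and in fact the split is slightly slicker (no separate ``different links'' check). The weak point is your justification that $\pi$ must enter $\Delta_a$. As stated, the pinch claim is not correct: the corridor $\bar P[1:b-1]\setminus R$ is an arc-complement in a disk and is therefore simply connected, so it is not pinched anywhere. What \emph{is} pinched at $P[c]$ is the corridor with $\operatorname{int}(\Delta_a)$ removed, because then the new boundary arc $P[a-1]P[a+1]$ meets the tentacle boundary precisely at $P[c]$. Turning that corrected picture into a proof---showing the two exits $[P[1],P[d-1]]$ and $[P[b+1],P[b-1]]$ lie in different components once $P[c]$ and $\operatorname{int}(\Delta_a)$ are deleted---amounts to producing separating Jordan curves, and the natural ones are exactly the paper's $J_1$ and $J_2$. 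So your plan converges to the paper's argument once the ``main technical step'' is actually carried out; the paper just goes there directly.
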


\begin{proof}
Note that $\square_d$ is convex because $\square_b$ is convex, and $\bar P[0:d]$ is simple because $P[0:b]$ is simple, $P[d]\in P[0]P[b]$ and $P[b+1:d-1]$ is contained in $\bar P[1:b-1]$.
For the same reasons, $\bar P[0:d]$ contains $\square_d$, and since $P[b+1:d-1]$ is disjoint from the interior of $\Delta_a$, we also have $\Delta_a\subset \bar P[0:d]$.

To see that the link distance with respect to $P[0:d]$ from any point in $\square_d$ to any point in $\Delta_b$ is at least $2k$, consider a polygonal path $\pi$ from $p\in\square_d$ to $q\in\Delta_b$ that avoids $P[0:d]$.
Since $P[a-1:c] \cup P[a-1]P[c]$ separates $p$ from $q$, $\pi$ must cross a point $x$ along $P[a-1]P[c]$. 
Similarly, since $P[a+1:c] \cup P[a+1]P[c]$ separates $p$ from $q$, $\pi$ must cross a point $y$ along $P[a+1]P[c]$. 
According to our hypothesis, both $\pi[p,x]$ and $\pi[y,q]$ contains at least $k$ segments. 
Note that the segment $xy$ is blocked by $P[c]$, so there must be a vertex of $\pi$ between $x$ and $y$ along $\pi[x,y]$. 
Therefore, $\pi = \pi[p,x] \cup \pi[x,y] \cup \pi[y,q]$ contains at least $2k$ segments. 
This proves the lemma.
\end{proof}

We will choose $P[b+1]$ and $P[c]$ as a pair of points from segments $P[1]P[b-1]$ and $P[a-1]P[a+1]$, respectively, that minimizes the link distance (also minimizing over all realizations of $P[0:b]$).
If $k$ is this link distance, we define $c=b+1+k$.
As we will see, this definition of $P[b+1]$ and $P[c]$ will ensure that in any realization of $P[0:d]$, the chain $P[b+1:d-1]$ enters neither $\square_b$ nor $\Delta_a$.

Note that the realization space of minimum link paths is open: all vertices can be slightly moved in any direction without colliding with the chain $P[0:b]$.
Hence, we can choose $d=c+k+1$ and define $P[d-1]$ as a point in $P[1]P[b+1]$ with the same link distance from $P[c]$ as $P[b+1]$.
We finally define $P[d]=(P[0]+P[b])/2$.
It is not immediately clear whether the whole chain $P[b:d]$ can be realized, because the chains to and from $P[c]$ could cross each other.
However, the following lemma states that it is realizable when choosing $P[d-1]$ correctly and, furthermore, that $P[b:d]$ stays in the polygon $\bar P[0:b]$.

\begin{lemma}\label{lem:realizablecontainment}
Choosing $P[d-1]$ on $P[1]P[b+1]$ sufficiently close to $P[b+1]$, the chain $P[b:d]$ can be realized.
In all realizations, the chain $P[b+1:d-1]$ is contained in $\bar P[1:b-1]$ and is disjoint from the interior of $\Delta_a$, and we have $\bar P[0:d]\subset \bar P[0:b]$.
\end{lemma}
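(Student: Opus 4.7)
The plan is to first construct one specific realization of $P[b:d]$, and then verify the containment claims for an arbitrary realization.

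For the realization, I take a minimum link path $\pi_1$ from $P[b+1]$ to $P[c]$ with $k$ edges avoiding $P[0:b]$ (which exists by hypothesis), and use its interior vertices as $P[b+2],\ldots,P[c-1]$. Since the realization space of minimum link paths is open (as noted just before the lemma), there is $\epsilon>0$ such that every $q\in P[1]P[b+1]$ within distance $\epsilon$ of $P[b+1]$ also has link distance $k$ to $P[c]$, witnessed by a path $\pi_2$ whose interior vertices are small perturbations of those of $\pi_1$ reversed. Setting $P[d-1]:=q$ and pushing the perturbation consistently to one side of $\pi_1$ ensures that $\pi_2\setminus\{P[c]\}$ is disjoint from $\pi_1$, so $\pi_1\cup\pi_2$ is a simple polygonal path from $P[b+1]$ through $P[c]$ to $P[d-1]$. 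Prepending $P[b]P[b+1]$ and appending $P[d-1]P[d]$, both of which lie inside $\square_b$ by convexity of the quadrilateral, yields a simple realization of $P[b:d]$ disjoint from $P[0:b]$.

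For the containment in an arbitrary realization, note that $P[b+1:c]$ has exactly $c-b-1=k$ edges and avoids $P[0:b]$, so it is a minimum link path from $P[b+1]$ to $P[c]$, and similarly $P[c:d-1]$ is a minimum link path from $P[c]$ to $P[d-1]$. I would prove that any such minimum link path $\gamma=(v_0,\ldots,v_k)$ lies in $\bar P[1:b-1]\setminus\mathrm{int}(\Delta_a)$ by two truncation arguments that convert detours into length contradictions. If $\gamma$ meets $\mathrm{int}(\Delta_a)$, the first edge doing so must enter across $P[a-1]P[a+1]$ (the other edges of $\Delta_a$ are blocked), and truncating $\gamma$ at the entry point $c'$ produces a $j$-edge path from $P[b+1]$ to a point of $P[a-1]P[a+1]$; the minimality of the pair $(P[b+1],P[c])$ over $P[1]P[b-1]\times P[a-1]P[a+1]$ forces $j=k$, and a short geometric case analysis on the position of $v_{k-1}$ relative to the line through $P[a-1]P[a+1]$ then rules out the remaining configuration. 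Symmetrically, if an interior vertex $v_m$ of $\gamma$ lies in $\mathrm{int}(\square_b)$, truncating from the last exit point $y\in P[1]P[b-1]$ yields a $(k-m)$-edge path from $y$ to $P[c]$, which by the same minimality forces $m=0$, contradicting that $v_m$ was an interior vertex; the same last-exit idea applied to the chord $P[0]P[b]$ rules out leaving $\bar P[0:b]$ altogether.

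The containment $\bar P[0:d]\subset\bar P[0:b]$ then follows by tracking boundaries: since $P[d]\in P[0]P[b]$, the closing segment $P[0]P[d]$ is a sub-segment of $P[0]P[b]$, and the new portion $P[b:d]$ of the boundary decomposes as $P[b]P[b+1]\subset\square_b$, the chain $P[b+1:d-1]\subset\bar P[1:b-1]$ (by the previous step), and $P[d-1]P[d]\subset\square_b$. The whole simple closed curve bounding $\bar P[0:d]$ therefore lies in $\square_b\cup\bar P[1:b-1]=\bar P[0:b]$, giving $\bar P[0:d]\subset\bar P[0:b]$. The hardest step is the all-realizations containment: because the link-distance hypothesis only fixes the number $k$, forbidding same-length detours through $\mathrm{int}(\Delta_a)$ or $\mathrm{int}(\square_b)$ depends crucially on the minimality of $(P[b+1],P[c])$ across the entire pair of segments, which the truncation trick translates into a contradiction on the edge count.
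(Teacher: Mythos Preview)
Your proposal is correct and follows essentially the same approach as the paper. For the realization you perturb the outgoing minimum link path to obtain the return path, which is exactly what the paper does concretely by placing the new vertices on the angular bisectors of the old ones; for the containment claims you use the same truncation-against-minimality argument as the paper (your explicit split into the two halves $P[b+1:c]$ and $P[c:d-1]$ and the handling of the residual $j=k$ case make the argument slightly more detailed than the paper's terse version, but the idea is identical).
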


\begin{proof}
We choose $P[d-1]$ on $P[1]P[b+1]$ so close to $P[b+1]$ that the chain $P[d-1]P[b+2:c]$ also avoids $P[0:b]$.
We can then choose the vertices of $P[c+1:d-2]$ on the angular bisectors the vertices $P[b+2:c-1]$; see \Cref{fig:bisector}.
This shows that the whole chain $P[b:d]$ is realizable.

\begin{figure}
\centering   \includegraphics[page=3]{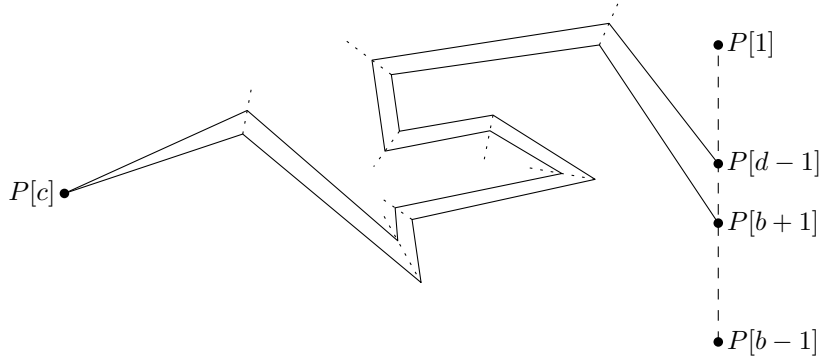}
\caption{Choosing $P[d-1]$ sufficiently close to $P[b+1]$ and the vertices $P[c+1:d-2]$ on the bisectors of $P[b+2:c-1]$, we can ensure that the chain $P[b:d]$ is realizable.}
\label{fig:bisector}
\end{figure}

If the chain $P[b+1:d-1]$ is not in the polygon $\bar P[1:b-1]$, then $P[b+1:d-1]$ leaves $\bar P[1:b-1]$ through the segment $P[1]P[b-1]$ and enters again through a point $y$ in order to get to $P[c]$.
Thus, the link distance from $y$ to $P[c]$ is smaller than that from $P[b+1]$, contradicting the choice of $P[b+1]$.
Likewise, $P[b+1:d-1]$ is disjoint from the interior of $\Delta_a$, since otherwise the chain crosses the segment $P[a-1]P[a+1]$ at a point with smaller link distance to $P[b+1]$ than $P[c]$, which contradicts the choice of $P[c]$.
It thus follows that $P[b:d]$ is in $\bar P[0:b]$, so we conclude that $\bar P[0:d]\subset\bar P[0:b]$.
\end{proof}

We can now give a recursive realizable partial description of a polygon with more than $2^j$ vertices, of which only $4j+O(1)$ are specified, where a bound also can be given.
Our starting point is a convex heptagon $\bar P[0:6]$.
We then define $a_0=3$ and $b_0=6$.
We now repeat the following pattern for $i=0,\ldots,j$.
We construct points with indices $b_i+1,c_i,d_i-1,d_i$ from the chain $P[0:b_i]$ as described above.
We then define $a_{i+1}=b_i$ and $b_{i+1}=d_i$.

\begin{lemma}
Suppose that we have specified vertices of a polygon $P$ as described above.
The polygon $P$ can be realized, and every realization is contained in the heptagon $\bar P[0:6]$.
The polygon has $d_j+1> 2^j$ vertices, although only $4j+O(1)$ have been specified.
\end{lemma}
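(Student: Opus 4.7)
The approach is induction on the iteration index $i$, with the invariant that after iteration $i$ the indices $(a_{i+1}, b_{i+1}) = (b_i, d_i)$ satisfy the three conditions of the setting preceding \Cref{lemma:doubling} with link distance at least $2^{i+1}$, the chain $P[0:b_{i+1}]$ is realizable, and $\bar P[0:b_{i+1}] \subseteq \bar P[0:6]$ in every realization.

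The base case is the initial convex heptagon with $(a_0, b_0) = (3, 6)$ and $k = 1$: $\square_6 = P[0]P[1]P[5]P[6]$ is convex as four vertices of a convex polygon, $\Delta_3 = P[2]P[3]P[4]$ lies inside the heptagon, and the link distance from any point in $\square_6$ to any point in $\Delta_3$ with respect to $P[0:6]$ is at least $1$ since the two regions have disjoint interiors. For the inductive step I apply \Cref{lem:realizablecontainment} first: after placing $P[b_i+1]$, $P[c_i]$, $P[d_i-1]$, $P[d_i]$ as prescribed in the paragraphs following \Cref{lemma:doubling} (with $P[d_i-1]$ sufficiently close to $P[b_i+1]$), the chain $P[b_i:d_i]$ is realizable, every realization has $P[b_i+1:d_i-1] \subset \bar P[1:b_i-1]$ and disjoint from the interior of $\Delta_{a_i}$, and $\bar P[0:d_i] \subseteq \bar P[0:b_i]$. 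These containments are exactly the additional hypotheses needed by \Cref{lemma:doubling}, which then promotes the invariants to $(a_{i+1}, b_{i+1}) = (b_i, d_i)$ with link distance doubled to $2^{i+1}$. Iterating across the $j+1$ steps and composing the containments gives $\bar P[0:d_j] \subseteq \bar P[0:6]$, a bounded region, so every realization is contained in the heptagon.

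For the counts, the recurrences $c_i = b_i + 1 + 2^i$ and $d_i = c_i + 2^i + 1$ yield $b_{i+1} - b_i = 2^{i+1} + 2$; telescoping from $b_0 = 6$ gives $d_j > 2^{j+2} > 2^j$, so the polygon has $d_j + 1 > 2^j$ vertices. The specified vertices are the seven heptagon vertices plus the four new points $P[b_i+1], P[c_i], P[d_i-1], P[d_i]$ per iteration, totalling $4(j+1) + 7 = 4j + O(1)$. The main obstacle is to apply the two lemmas in the correct order at each step, since \Cref{lemma:doubling} needs containment hypotheses that are only supplied by \Cref{lem:realizablecontainment}; once this is respected, the induction runs mechanically.
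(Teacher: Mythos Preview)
Your proposal is correct and follows essentially the same inductive approach as the paper, invoking \Cref{lem:realizablecontainment} for realizability and the nested containments $\bar P[0:d_i]\subset\bar P[0:b_i]$ and \Cref{lemma:doubling} for the doubling of link distance, then telescoping to obtain the bound and the vertex count. One minor point: your recurrences $c_i = b_i + 1 + 2^i$ and $d_i = c_i + 2^i + 1$ should strictly be inequalities, since the minimum link distance $k_i$ used to define $c_i$ and $d_i$ satisfies $k_i \ge 2^i$ but need not equal $2^i$; this only strengthens your conclusion $d_j > 2^j$.
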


\begin{proof}
By \Cref{lemma:doubling}, each chain $P[b_{i+1}+1:d_{i+1}-1]$ consists of at least twice as many edges as the predecessor $P[b_i+1:d_i-1]$.
It hence follows that $d_j$ is at least $1+2+\ldots+2^j>2^j$.
Each time we increase $j$, we define $4$ new vertices, so in total there are $4j+O(1)$ explicitly defined vertices.

By \Cref{lem:realizablecontainment}, each of the chains $P[b_i,d_i]$ can be realized, so the whole polygon can be realized, and we have
\[
\bar P[0,d_j]\subset \bar P[0,d_{j-1}] \subset \cdots \subset \bar P[0,d_0]\subset\bar P[0,6].
\]
\end{proof}

This immediately implies:

\begin{corollary}
For arbitrarily large values of $n$, there exists a polygon with $n$ vertices so that after specifying $4\log_2 n+O(1)$ vertices, a bound on the polygon can be given.
\end{corollary}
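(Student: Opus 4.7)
The corollary is essentially a rewording of the preceding lemma, so the plan is just to translate the $j$-parameterized statement into one parameterized by the vertex count $n$. Specifically, I would fix an arbitrary positive integer $j$ and invoke the preceding lemma to obtain a realizable partial description $P$ of a polygon with exactly $n := d_j+1$ vertices, of which only $4j+O(1)$ have been specified. The same lemma already exhibits a concrete bound on every realization of $P$, namely the convex heptagon $\bar P[0:6]$ from which the construction was started.

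From the lemma's inequality $n = d_j+1 > 2^j$ I get $j < \log_2 n$, so the number of specified vertices is at most $4\log_2 n + O(1)$, matching the statement. As $j$ ranges over $\mathbb{N}$, the resulting values of $n$ form an unbounded sequence, which already fulfils the phrase ``for arbitrarily large values of $n$''. Hence the corollary follows with essentially no additional work.

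If one prefers the stronger reading that the bound should hold for \emph{every} sufficiently large $n$, I would add a short padding step. Given $n$, choose $j$ to be the largest integer with $d_j+1\le n$ and then add $n-(d_j+1)$ unspecified vertices as collinear subdivisions of an already present edge of the partial description, e.g.\ on the segment $P[0]P[1]$. Such subdivisions neither alter the set of realizations nor the bound $\bar P[0:6]$, and they contribute no new specified vertex. Since each step of the recursion grows $d_i$ by only a bounded multiplicative factor, one has $j\ge \log_2 n - O(1)$, and so $4j+O(1)=4\log_2 n + O(1)$ still holds.

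I do not expect any genuine obstacle. The only mildly delicate point is the padding argument, and it is trivial because inserting collinear points on an edge of an already realizable partial description preserves realizability and does not enlarge any realization.
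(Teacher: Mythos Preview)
Your main argument (the first two paragraphs) is correct and is precisely what the paper does; the corollary is stated immediately after the lemma with the words ``This immediately implies,'' and your derivation of $j<\log_2 n$ from $n=d_j+1>2^j$ is the intended one-line justification.

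One caution about the optional padding paragraph: inserting \emph{unspecified} indices between $P[0]$ and $P[1]$ is not as innocuous as you claim. In a realization, those new vertices are not forced to lie on the segment $P[0]P[1]$; the chain from $P[0]$ to $P[1]$ could leave the heptagon, and then the containment argument $\bar P[0:d_j]\subset\bar P[0:6]$ no longer goes through. (Indeed, every gap between consecutive specified vertices in this construction is already a minimum-link path, so lengthening any gap gives the chain room to escape.) Since the corollary only asks for arbitrarily large~$n$, not all sufficiently large~$n$, this padding step is unnecessary and you can simply drop it.
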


\section{Lower Bound}\label{sec:lowerbound}

In this section, we show that any realizable partial description $P$ of a polygon with $n$ vertices must specify at least $\log_3 n - \log_3\log_3 n$ vertices to be bounded. If there are less specified points, we can construct a realization containing an arbitrary point on its boundary, so the partial description is unbounded.

The following simple lemma will be useful in our construction.

\begin{lemma}
    \label{lem:link-distance}
    Given a set of polygonal obstacles in the plane with $m$ vertices in total, for any pair $(x, y)$ of reachable points in the plane, 
    the link distance from $x$ to $y$ is at most $m+1$. 
\end{lemma}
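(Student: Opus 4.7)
My plan is to exhibit a polygonal path from $x$ to $y$, avoiding all obstacles except possibly at the endpoints, whose interior bends all lie at obstacle vertices. Since there are only $m$ obstacle vertices in total, any \emph{simple} such path has at most $m$ bends and therefore at most $m+1$ edges, which is exactly the bound we want.

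The main step is to consider the \emph{visibility graph} $G$ whose nodes are $\{x,y\}\cup V$, where $V$ is the set of all $m$ obstacle vertices, and whose edges connect two nodes whose connecting open segment is disjoint from the interiors of the obstacles. I claim that if $x$ and $y$ are reachable in the free space, then they lie in the same connected component of $G$. To see this, take any polygonal path $\pi$ witnessing reachability and apply a straightening procedure. As long as some interior vertex $v$ of $\pi$ is not in $V$, either the two edges incident to $v$ can be replaced by a single segment avoiding the obstacles (when $v$ lies in the open free space, a small triangular neighborhood is free, so one can shortcut), or $v$ lies on an obstacle edge, in which case $v$ can be slid along that edge toward one of its two endpoints until it reaches a vertex of $V$ or the bend disappears. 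The procedure terminates with a path whose interior vertices all lie in $V$, hence corresponding to a walk in $G$ from $x$ to $y$.

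Finally, from any walk in $G$ I extract a \emph{simple} path in $G$ by removing repeated vertices; this simple path has at most $m$ interior vertices and hence uses at most $m+1$ edges. The corresponding polygonal path in the plane realizes a route from $x$ to $y$ avoiding the obstacles with at most $m+1$ segments, proving that the link distance is at most $m+1$.

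The part I expect to require the most care is the straightening argument in the second paragraph: one has to rule out degenerate cases where a bend ``gets stuck'' at a non-vertex boundary point of some obstacle. The sliding-along-an-edge move handles this, since every obstacle edge has two endpoints in $V$, so after finitely many sliding and shortcutting operations the path stabilizes with all bends located at obstacle vertices. This is essentially the classical observation that shortest paths amidst polygonal obstacles bend only at reflex vertices of the obstacles, but for a self-contained proof of the lemma the above explicit deformation is what one needs to justify.
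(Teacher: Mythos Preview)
Your argument is correct and follows essentially the same approach as the paper: both reduce the bound to the classical fact that one can find a path from $x$ to $y$ whose interior bends occur only at the $m$ obstacle vertices, and hence has at most $m+1$ segments. The paper justifies this by invoking the Euclidean shortest path (whose internal vertices are obstacle vertices, cited as folklore) and then perturbing into the open free space, whereas you spell out a direct straightening/visibility-graph deformation; the only step you omit that the paper includes explicitly is the final perturbation off the obstacle boundary so that the path is genuinely disjoint from the obstacle edges.
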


\begin{proof}
    Consider the shortest path in the Euclidean distance from $x$ to $y$ that avoids the interior of the obstacles. 
    A folklore result says that this path is polygonal, and every internal vertex in the path is a vertex of some obstacle. 
    Therefore, this path consists of at most $m+1$ segments. 
    With a proper perturbation on all the vertices, we can make the path lie in the free space. 
\end{proof}

To construct the realization, we look at the (currently unknown) chains between specified points. Since we know the indices of the specified points, we know the lengths of these chains. Intuitively, if there are some long chains that are much longer than (even the sum of) the other short chains, then, no matter how the short chains look like, there will be enough freedom to construct the long chains that can reach anywhere in the plane.

Consider a realizable partial description $P$ of a polygon with $n$ vertices, which specifies $k$ points.
Let $0 \le i_1 < i_2 < \dots < i_k < n$ be the indices of the specified points.
Let $a_j = i_{j + 1} - i_{j}$ for $j \in [k-1]$ and $a_k = i_1 + n - i_k$, i.e. $a_j$ is the length of the chain from $P[i_j]$ to $P[i_{j+1}]$. Let $a_{(j)}$ be the $j$-th smallest value among $a_1, \dots, a_k$.
Let $x_j = k + \sum_{i = 1}^j (a_{(i)} - 1)$ for $j = 0, 1, \dots, k$. 

\begin{lemma}
    \label{lem:bound}
    If there exists $i \in \{0, 1, \dots, k-1\}$ such that for every $j \in \{i+1, \dots, k-1\}$, $a_{(j)} > 2^{j-i-1} x_{i}$, and $a_{(k)} > 3 \cdot 2^{k-i-1}x_i$, then for any point $v$, there exists a realization $Q$ with $v$ on its boundary. 
\end{lemma}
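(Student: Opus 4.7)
The plan is to construct a realization $Q$ of $P$ with $v$ on its boundary in three phases, using the $i$ shortest chains to build a compact obstacle set, then routing the intermediate chains $a_{(i+1)},\ldots,a_{(k-1)}$ one at a time in increasing order of length, and finally sending the longest chain $a_{(k)}$ through $v$.

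In Phase~1, I realize each of the $i$ shortest chains as a thin zig-zag polygonal path close to the straight segment between its two prescribed endpoints, choosing these generically so they avoid $v$. After this phase the drawn vertices (specified points plus intermediate vertices of short chains) total at most $x_i=k+\sum_{\ell=1}^{i}(a_{(\ell)}-1)$.

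In Phase~2, for $j=i+1,\ldots,k-1$, I route the $j$-th shortest chain as a thin zig-zag along a minimum-link polygonal path between its two endpoints that avoids the obstacles drawn so far. By \Cref{lem:link-distance}, the link distance between these endpoints is at most $x_{j-1}+1$, where $x_{j-1}$ is the current total vertex count. A telescoping induction, using the hypothesis $a_{(\ell)}>2^{\ell-i-1}x_i$ together with the sortedness of the $a_{(j)}$ and the just-sufficient choice of $i$, gives the key combinatorial inequality $x_{j-1}\le 2^{j-i-1}x_i$; hence $a_{(j)}>2^{j-i-1}x_i\ge x_{j-1}$, so the chain has enough edges to be realized along a minimum-link path. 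Any excess edge budget is absorbed by inserting vertices along a single segment of the drawn path.

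In Phase~3, I route the longest chain from its start endpoint via a minimum-link path to $v$, then from $v$ via another minimum-link path to its end endpoint. Each leg has link distance at most $x_{k-1}+1\le 2^{k-i-1}x_i+1$, so the total edge cost is at most $2\cdot 2^{k-i-1}x_i+2<3\cdot 2^{k-i-1}x_i<a_{(k)}$ (using $x_i\ge 1$ and $k-i\ge 1$). This places $v$ as a vertex of the longest chain and hence on the boundary of $Q$.

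The main obstacle is carefully establishing the telescoping bound $x_{j-1}\le 2^{j-i-1}x_i$ in Phase~2: because the hypothesis only provides lower bounds on each $a_{(\ell)}$ while we need an upper bound on the accumulated obstacle count, the argument must use sortedness, the factor of $3$ slack in the condition on $a_{(k)}$, and perhaps a refined ``thin-strip'' link-distance accounting to make the induction close. Simplicity of the final polygon and the counterclockwise cyclic order of the specified points are preserved because each new chain is drawn in the free space of prior obstacles with a small generic perturbation (as in the proof of \Cref{lem:link-distance}), so the whole boundary is a Jordan curve traversed in the prescribed cyclic order.
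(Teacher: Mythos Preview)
Your three-phase outline matches the paper's, but Phase~2 contains a real gap. You assert the ``key combinatorial inequality'' $x_{j-1}\le 2^{j-i-1}x_i$, where $x_{j-1}=k+\sum_{\ell\le j-1}(a_{(\ell)}-1)$ counts all vertices after the first $j-1$ chains have been drawn \emph{and then padded to their full lengths}. That inequality is simply false: the hypothesis of the lemma gives only \emph{lower} bounds on the $a_{(\ell)}$, so $x_{j-1}$ can be arbitrarily large compared to $x_i$ (e.g.\ $k=3$, $i=0$, $x_0=3$, $a_{(1)}=a_{(2)}=a_{(3)}=100$ satisfies the hypothesis but gives $x_1=102\not\le 6$). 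You flag this yourself as ``the main obstacle'' and only gesture at sortedness, the factor~$3$, and ``thin-strip accounting'' without an actual argument; none of these rescues the stated inequality.

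The idea you are missing, and which the paper uses, is to \emph{postpone all subdivision to the very end}. Draw each intermediate chain as a bare minimum-link path, so that at each step the number of new vertices is at most the current obstacle count; then the obstacle count at most doubles per step, and a clean induction gives at most $2^{j-i-1}x_i$ vertices before step~$j$ with no inequality on $x_{j-1}$ needed. Only after the last chain is routed does one insert extra collinear vertices to meet the prescribed lengths. Two smaller issues: in Phase~1, placing the short chains ``near the straight segments between their endpoints'' can make them cross each other---the paper instead borrows those chains from an arbitrary realization of $P$, which exists by assumption; and in Phase~3 you forget that the second leg must also avoid the first leg, so its link distance is governed by roughly $2^{k-i}x_i$ obstacle vertices, not $2^{k-i-1}x_i$. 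This is exactly why the hypothesis carries a factor~$3$, and the paper handles the return from $v$ by closely following the entire existing path (as in \Cref{lem:realizablecontainment}) rather than by a second free minimum-link path, which also makes simplicity of the final polygon immediate.
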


\begin{proof}
    We will construct the chains in the non-decreasing order of their lengths. 
    For the chains corresponding to $a_{(1)}, \dots, a_{(i)}$, we follow an arbitrary realization of $P$. 
    Starting from $a_{(i+1)}$, we iteratively use the minimum link path, considering all existing segments as obstacles. 

    Now we start to construct the chains corresponding to $a_{(i+1)}, \dots, a_{(k)}$ one by one, each with a possibly shorter chain than the desired length. 
    Before constructing the chain corresponding to $a_{(i+1)}$, the total number of vertices in the obstacles is $x_{i}$. 
    By \cref{lem:link-distance}, we can always construct the chain corresponding to $a_{(i+1)}$ with at most $x_{i}$ new vertices. 
    So in each step, the complexity of obstacles at most doubles. By induction, since $a_{(j)} > 2^{j-i-1}x_{i}$ for all $j \in \{i+1,\dots,k-1\}$, we can successfully construct all the chains $a_{(i+1)}, \dots, a_{(k-1)}$, and the total amount of vertices we used for $a_{(1)}, \dots, a_{(k-1)}$ is at most
    $$x_i+\sum_{j=i+1}^{k-1}2^{j-i-1}x_i=2^{k-i-1} x_i.$$

    The next step is to enclose the longest gap corresponding to $a_{(k)}$. 
    At this moment, the obstacles form a simple path. 
    Let $x$ and $y$ be the two ends of the obstacles. 
    Recall that the number of vertices in this path is at most $2^{k-i-1} x_{i}$.
    If $v$ is already on some obstacles, we only need to enclose the polygon with a sufficiently close path following the obstacles as in the proof of \cref{lem:realizablecontainment}, which only need $2^{k-i-1} x_{i} < a_{(k)}$ segments. 
    Otherwise, we will include $v$ as a vertex in the longest chain. 
    We can first connect $y$ to $v$ with a path of at most $2^{k-i-1} x_{i}$ new vertices, by \cref{lem:link-distance}. 
    After that, we will enclose the polygon with a sufficiently close path from $v$ to $x$ following the existing path and obstacles, by \cref{lem:realizablecontainment}, which requires $2^{k-i-1} x_{i} + 2^{k-i-1} x_{i} = 2^{k-i} x_{i}$ new vertices. 
    In total, we construct a path from $y$ to $x$ through $v$ of link distance at most $3 \cdot 2^{k-i-1}x_i < a_{(k)}$. 
    
    Now we end up with a polygon where the length of each chain is not greater than the requirement. 
    To exactly realize the requirement, one can always subdivide a segment to include more vertices. 
\end{proof}

It's not hard to see that, when $k$ is small enough, the condition in \cref{lem:bound} always holds.

\begin{theorem}
    \label{thm:lower-bound}
    Consider a realizable partial description $P$ of a polygon with $n$ vertices, of which $k$ vertices are specified. 
    If $k \cdot 3^{k+1} \le n$, then for any point $v$, there exists a realization $Q$ such that $v$ is on the boundary of $Q$.
\end{theorem}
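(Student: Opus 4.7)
My plan is to prove the theorem by contrapositive, reducing to \Cref{lem:bound}: I will show that whenever $n \ge k \cdot 3^{k+1}$, some index $i \in \{0, \dots, k-1\}$ must satisfy the hypothesis of \Cref{lem:bound}, from which the conclusion follows directly.

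Suppose for contradiction that no such $i$ exists. Then for each $i \in \{0,\ldots,k-1\}$ I can pick a witness $j(i) \in \{i+1, \ldots, k\}$ with $a_{(j(i))} \le 2^{j(i)-i-1} x_i$, replacing this bound by $3 \cdot 2^{k-i-1} x_i$ when $j(i) = k$. Iterating $i_0 := 0$ and $i_{t+1} := j(i_t)$ yields a strictly increasing sequence $0 = i_0 < i_1 < \dots < i_m = k$ (the iteration must eventually reach $k$ because once $i_t = k-1$ only the second case can provide a witness); let $\Delta_t := i_{t+1} - i_t$, so $\sum_{t=0}^{m-1} \Delta_t = k$ and $m \ge 1$. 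Because $(a_{(j)})_j$ is non-decreasing, the witness bound extends to the entire block: $a_{(j)} \le 2^{\Delta_t-1} x_{i_t}$ for $j \in \{i_t+1, \dots, i_{t+1}\}$, with an extra factor $3$ when $t = m-1$. Summing over $j$ gives the recurrences
\[
x_{i_{t+1}} \le (1 + \Delta_t 2^{\Delta_t - 1})\, x_{i_t} \quad \text{for } t < m-1, \qquad x_k \le (1 + 3\Delta_{m-1} 2^{\Delta_{m-1}-1})\, x_{i_{m-1}}.
\]

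What remains is a numerical estimate. A straightforward induction shows $f(\Delta) := 1 + \Delta 2^{\Delta-1} \le 2 \cdot 3^{\Delta-1}$ for $\Delta \ge 1$ (the inductive step $f(\Delta+1) \le 3 f(\Delta)$ reduces to the trivial inequality $2 + 2^{\Delta-1}(\Delta-2) \ge 0$), and hence $g(\Delta) := 1 + 3\Delta 2^{\Delta-1} = 3f(\Delta) - 2 < 2 \cdot 3^\Delta$. Telescoping the recurrences above with $x_0 = k$ yields
\[
n = x_k < k \cdot 2^m \cdot 3^{k-m+1} = k \cdot 3^{k+1} \cdot (2/3)^m \le \tfrac{2}{3}\, k \cdot 3^{k+1},
\]
using $m \ge 1$, contradicting $n \ge k \cdot 3^{k+1}$. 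The main subtlety is that the constants are essentially tight: the base $3$ in $k \cdot 3^{k+1}$ exactly matches the base appearing in $f(\Delta) \le 2 \cdot 3^{\Delta-1}$ (with equality at $\Delta = 1$), so all the slack in the contradiction comes from the single factor $(2/3)^m \le 2/3$. Apart from this bookkeeping, the proof is mechanical once the witness chain is set up.
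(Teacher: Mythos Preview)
Your proof is correct, but it takes a genuinely different route from the paper's. The paper argues constructively: it sets $i^* = \max\{i \in \{0,\dots,k-1\} : x_i \le k\cdot 3^i\}$ (well defined since $x_0 = k$), and then verifies the hypothesis of \Cref{lem:bound} directly at $i = i^*$ using the averaging bound $a_{(j)} \ge (x_j - x_{i^*})/(j - i^*)$ together with $x_j > k\cdot 3^j \ge 3^{j-i^*} x_{i^*}$ for $j > i^*$; the required numerical inequality is just $(3^m - 1)/m > 2^{m-1}$. Your approach instead assumes failure at every $i$, threads a chain of witnesses $0 = i_0 < i_1 < \cdots < i_m = k$, and telescopes the resulting recurrences to bound $x_k$. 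What the paper's argument buys is brevity and an explicit identification of the index that works; what yours buys is a marginally sharper threshold (your contradiction actually fires whenever $n \ge 2k\cdot 3^k$, not just $n \ge 3k\cdot 3^k$) and avoids having to guess the threshold function $k\cdot 3^i$. Both rest on the same ingredients---monotonicity of $a_{(j)}$ and the growth comparison between powers of $2$ and $3$---so the difference is organizational rather than conceptual.
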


\begin{proof}
    Let $i^* = \max\{i \in \{0, 1, \dots, k - 1\}: x_{i} \le k\cdot3^i\}$. Since $x_{0} = k$, $i^*$ is well defined. We will show that $i^*$ satisfies the condition in \cref{lem:bound}.
    For every $j \in \{i^*+1, \dots, k-1\}$, we have $x_j>k\cdot 3^j \ge 3^{j-i^*}x_{i^*}$ by the definition of $i^*$. Then
    \[
        a_{(j)} \ge \frac{x_j-x_{i^*}}{j-i^*} > \frac{3^{j-i^*}-1}{j-i^*} x_{i^*} > 2^{j-i^*-1}x_{i^*},
    \]
    where the first inequality comes from the fact that $a_{(j)}$'s are non-decreasing.
    Also note that $x_k=n \ge k\cdot 3^{k+1} \ge 3^{k-i^*+1}x_{i^*}$, so
    \[
        a_{(k)} \ge \frac{x_k-x_{i^*}}{k-i^*} > \frac{3^{k-i^*+1}-1}{k-i^*} x_{i^*} > 3\cdot 2^{k-i^*-1}x_{i^*}.
    \]
    Therefore, by \cref{lem:bound}, the realization $Q$ exists.
\end{proof}

This immediately implies:

\begin{corollary}
    For any partial description $P$ of a polygon with $n$ vertices, of which at most $\log_3 n - \log_3\log_3 n - 1$ vertices are specified, it is either unrealizable or unbounded. 
\end{corollary}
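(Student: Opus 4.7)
The plan is to derive this corollary directly from \Cref{thm:lower-bound} by a short calculation and a contrapositive argument. First I would dispose of the unrealizable case: if $P$ has no realization, the statement is immediate, so assume $P$ is realizable and let $k \le \log_3 n - \log_3\log_3 n - 1$ be the number of specified vertices.

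Next I would verify that this value of $k$ satisfies the hypothesis $k \cdot 3^{k+1} \le n$ of \Cref{thm:lower-bound}. Using $k \le \log_3 n - \log_3\log_3 n - 1$, we get $3^{k+1} \le 3^{\log_3 n - \log_3\log_3 n} = n / \log_3 n$, so
\[
k \cdot 3^{k+1} \le (\log_3 n - \log_3\log_3 n - 1)\cdot \frac{n}{\log_3 n} \le n.
\]
(If $k$ is smaller than this bound, the inequality only becomes slacker, so this covers the general case.)

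Then I would invoke \Cref{thm:lower-bound}: for every point $v$ in the plane, there is a realization $Q$ of $P$ containing $v$ on its boundary. Finally, for any candidate bounded region $R \subset \mathbb{R}^2$, pick $v$ outside $R$; the corresponding realization $Q$ then has a point (namely $v$) lying outside $R$, so $Q \not\subset R$. Hence no bounded region is a bound on $P$, i.e., $P$ is unbounded.

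There is no real obstacle here — the only thing to be careful about is the arithmetic verifying the hypothesis of \Cref{thm:lower-bound}, and the (trivial) observation that having a realization through every point of the plane precludes any bounded region from containing all realizations.
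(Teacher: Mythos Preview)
Your proposal is correct and matches the paper's approach: the paper simply writes ``This immediately implies'' before the corollary, and your argument is exactly the intended verification that $k \le \log_3 n - \log_3\log_3 n - 1$ forces $k\cdot 3^{k+1}\le n$, followed by the obvious unboundedness conclusion from \Cref{thm:lower-bound}.
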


\printbibliography

\end{document}